\documentclass[10pt,journal]{IEEEtran} 

\usepackage[linesnumbered,ruled,vlined]{algorithm2e}
\usepackage{mathtools, cases}
\usepackage{textcomp}
\usepackage{makecell}
\usepackage{amsthm, enumitem}
\usepackage{amsmath,amsfonts,amssymb}
\usepackage{cite, url}
\usepackage{verbatim}
\usepackage{bm}
\usepackage{graphicx}
\usepackage{xcolor}
\usepackage{subcaption}
\usepackage{epstopdf}
\usepackage{mathrsfs}
\usepackage{txfonts}
\usepackage{lineno}
\usepackage{multirow}
\usepackage{booktabs}
\usepackage{color}
\usepackage{multicol}
\usepackage{stfloats}
\usepackage{diagbox}
\usepackage{esint}
\usepackage{float}
\usepackage{algpseudocode}
\usepackage{empheq}
\usepackage{tabularx}

\usepackage{flushend}

\newtheorem{lemma}{Lemma}
\allowdisplaybreaks[4]

\newtheorem{remark}{Remark}

\begin{document}

\title{Latency-Aware Generative Semantic Communications with Pre-Trained Diffusion Models}

\author{Li Qiao, Mahdi Boloursaz Mashhadi,~\IEEEmembership{Senior Member,~IEEE}, Zhen Gao,~\IEEEmembership{Member,~IEEE},\\ Chuan Heng Foh,~\IEEEmembership{Senior Member,~IEEE}, Pei Xiao,~\IEEEmembership{Senior Member,~IEEE}, and Mehdi Bennis,~\IEEEmembership{Fellow,~IEEE}

 \vspace{-10mm}
 

\thanks{This work was supported in part by the Shandong Province Natural Science Foundation under Grant ZR2022YQ62, in part by the Beijing Nova Program, in part by the National Natural Science Foundation of China (NSFC) under Grant U2233216 and Grant 62071044, and in part by the U.K. EPSRC  under Grant  EP/X013162/1.~~~L. Qiao is with the School of Information
and Electronics, Beijing Institute of Technology, Beijing 100081, China, also with 5GIC \& 6GIC, the Institute for Communication Systems (ICS), University of Surrey, GU2 7XH Guildford, U.K. (e-mail: qiaoli@bit.edu.cn).~~M. Boloursaz Mashhadi, C. H. Foh, and P. Xiao are with 5GIC \& 6GIC, Institute for Communication Systems (ICS), University of Surrey, GU2 7XH Guildford, U.K. (emails: \{m.boloursazmashhadi, c.foh, p.xiao\}@surrey.ac.uk).~~Z. Gao is with State Key Laboratory of CNS/ATM, Beijing Institute of Technology (BIT), Beijing 100081, China (e-mail: gaozhen16@bit.edu.cn).~~M. Bennis is with the Centre for Wireless Communications, University of Oulu, 90014 Oulu, Finland (e-mail: mehdi.bennis@oulu.fi).}

}

\maketitle

\begin{abstract}

Generative foundation AI models have recently shown great success in synthesizing natural signals with high perceptual quality using only textual prompts and conditioning signals to guide the generation process. This enables semantic communications at extremely low data rates in future wireless networks. In this paper, we develop a latency-aware semantic communications framework with pre-trained generative models. The transmitter performs multi-modal semantic decomposition on the input signal and transmits each semantic stream with the appropriate coding and communication schemes based on the intent. For the prompt, we adopt a \textit{re-transmission}-based scheme to ensure reliable transmission, and for the other semantic modalities we use an \textit{adaptive modulation/coding} scheme to achieve robustness to the changing wireless channel. Furthermore, we design a semantic and latency-aware scheme to allocate transmission power to different semantic modalities based on their importance subjected to semantic quality constraints. At the receiver, a pre-trained generative model synthesizes a high fidelity signal using the received multi-stream semantics. Simulation results demonstrate ultra-low-rate, low-latency, and channel-adaptive semantic communications.

\end{abstract}
\vspace{-2mm}
\begin{IEEEkeywords}
Generative AI, Semantic Communication, Pre-Trained Foundation Models, Stable Diffusion.
\end{IEEEkeywords}

\IEEEpeerreviewmaketitle

\vspace{-3mm}
\section{Introduction}
\textit{Semantic Communication (SemCom)} is poised to play a pivotal role in shaping the landscape of future AI/ML-driven communication systems. SemCom systems aim to significantly reduce the transmission rate by extracting and transmitting only the semantic message of interest based on the communication intent. \textit{Generative AI (GenAI)} models have recently proved to significantly enhance communication at the semantic-level \cite{xia2023generative, grassucci2023generative, li2024extreme}. GenAI models such as Diffusion \cite{StableDiffusion,NEURIPS2021_49ad23d1}, Flow-based \cite{Flow}, and GAN \cite{NIPS2014_5ca3e9b1} models, can learn the general distribution of natural signals through training and can generate new samples at the inference time. This generative process can be guided or conditioned to synthesize outputs with a desired semantic content. In \textit{Generative Semantic Communications (Gen SemCom)}, the semantics of interest are extracted at the transmitter, communicated over the channel, and then used at the receiver to guide a generative model to synthesize a semantically consistent signal with high fidelity. GenAI models are trained to maximize the perceptual quality and the fundamental bounds on Generative SemCom are governed by the \textit{rate-distortion-perception} theory \cite{RDP2, RDP1}, which determines the threefold trade-off between rate, distortion, and perceptual quality of the reconstructed signal.

The recent advent of powerful \textit{Generative Foundation Models} provides ample opportunities to develop ultra-low-rate semantic communication systems. The ultra low rate transmission can be achieved by transmitting data semantics in compressed format as a textual message or \textit{prompt}. For instance, the prompt \textit{“Teddy bear surfer rides the wave in the tropics”} can be used to generate a short video with its semantic content matching with the prompt. The generative foundation models such as Sora \cite{Sora}, Lumiere \cite{bar2024lumiere}, and DALL.E \cite{DallE} are pre-trained on large amount of data and can synthesize various types of AI Generated Content (AIGC) with high quality. The pre-trained nature of such models and their applicability to a vast range of multi-media synthesis tasks, has the potential to revolutionize generative semantic communications enabling universal intent and channel-adaptive SemCom systems empowered by pre-trained foundation models.

In this paper, we develop a universal generative semantic communications framework with pre-trained foundation models. We claim two key benefits in comparison with the existing semantic communication frameworks. Firstly, the foundation models possess a vast general knowledge leveraging the intensive self-supervised training process on huge amount of data. This alleviates the need for a shared knowledge base/graph between the semantic transmitter and receiver, obviating the need for corresponding knowledge sharing overheads imposed in current SemCom frameworks \cite{SemCom1, Semcom2, Semcom3, Semcom4, Semcom5}. This vast general knowledge also makes the proposed Generative SemCom framework applicable to various datasets and tasks thereby achieving universality. Secondly, the adoption of pre-trained models allows a separation-based SemCom architecture, alleviating the need for end-to-end joint training of the transmitter and receiver, which is required in many SemCom frameworks \cite{SemCom1, Semcom2, Semcom3, Semcom4}. Such a separation-based architecture offers better compatibility with the existing design of wireless communication networks in comparison with the end-to-end methods. The proposed framework is specifically suitable for scenarios which require communication of huge multi-modal data with stringent latency and reliability requirements, e.g. the wireless metaverse, extended/mixed reality (XR/MR), holographic teleportation, and the internet of senses.
The contributions of this work are three-fold:
\begin{itemize}
    \item We develop a semantic decomposition scheme at the transmitter which extracts the semantic content of the input signal in multiple semantic modalities. We extract the most important semantic contents as a compact textual message or prompt, along with multiple other modalities that act as conditioning signals to guide the synthesis process at the generative foundation model at the receiver.
    \item To achieve semantic-aware communication, we design a multi-stream scheme that transmits each extracted semantic modality with appropriate coding and communication techniques based on communication intent. Due to the importance of the textual prompt, a re-transmission scheme is applied to ensure reliable reception, while other modalities are transmitted with a modulation scheme adapted to the varying wireless channel.
    \item We design a semantic and latency-aware scheme to allocate the transmission power to different semantic streams based on semantic importance, and to adapt the modulation order to the varying wireless channels.
\end{itemize}




\textit{\textbf{Notations}}: Boldface lower and upper-case symbols denote column vectors and matrices, respectively. $[K]$ denotes $\{1,2,...,K\}$. $[{\bf v}]_n$ and $\left| {\bf v} \right|$ denote the $n$-th element and the length of vector ${\bf v}$, respectively. Finally, ${\bf v}\cdot {\bf v}$ is the dot production, and $Pr\{\cdot\}$ denotes probability
of an event.



\vspace{-4.5mm}
\begin{figure}[t]
		\centering  		
  \includegraphics[width=0.48\textwidth]{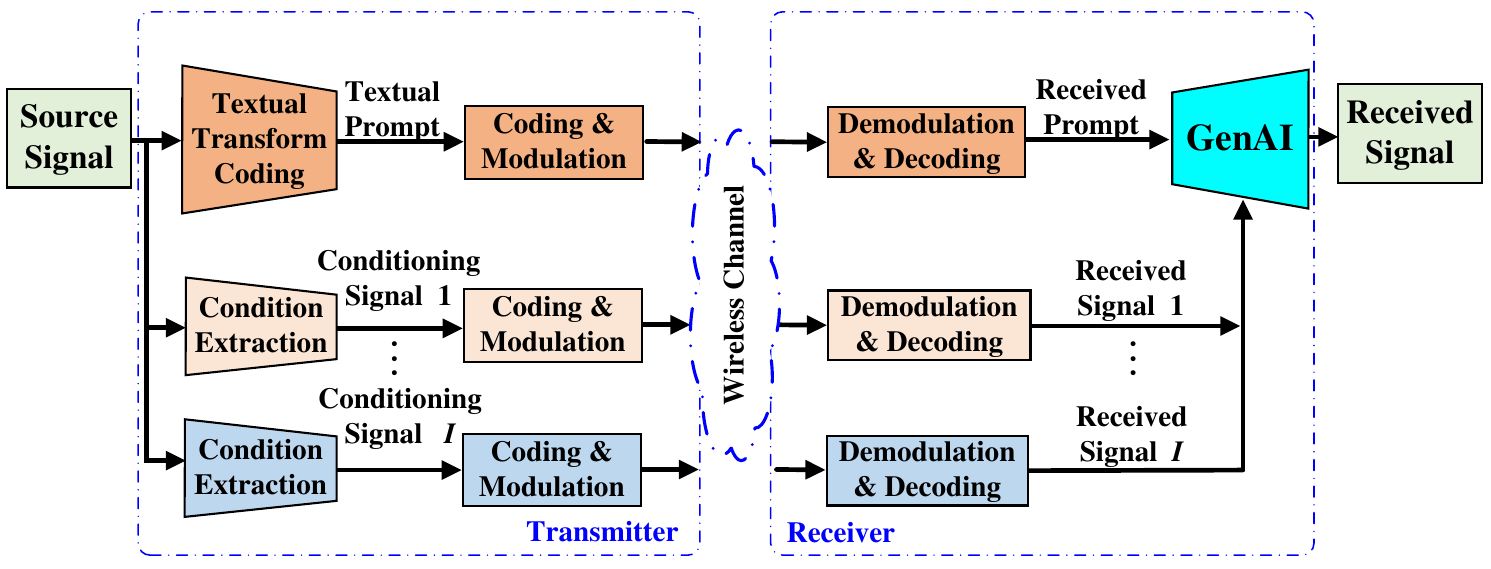}
  \captionsetup{font={footnotesize}, singlelinecheck = off, justification = raggedright,name={Fig.},labelsep=period}
		\caption{The Proposed Framework for Latency-aware Multi-stream Semantic Communication with Multi-Modal Generative Models.}
  \vspace{-7mm}
		\label{SystemDiag}
    \end{figure}

\section{Generative Foundation AI-based Semantic Communication}
Fig. \ref{SystemDiag} depicts our proposed framework for generative semantic communication with pre-trained foundation models. This framework includes multi-modal semantic decomposition/synthesis, semantic-aware multi-stream transmission, and latency-aware semantic power allocation.

\vspace{-3.7mm}
\subsection{Multi-modal Semantic Decomposition and Synthesis}

At the transmitter, a pre-trained textual transform encoder performs ultra-low-rate source-to-text transformation, extracting the textual message which acts as a prompt for the GenAI process at the receiver. Depending on the source signal (e.g. voice, image, video, point cloud, etc.), various large, medium, and small scale pre-trained AI models can be used for prompt extraction. The prompt contains the most important semantic contents of the source signal. At the same time, additional semantic information is extracted from the source signal based on communication intent. They are typically extracted in modalities other than text and provide additional conditioning signals to guide the GenAI process at the receiver. For example, if the source signal to be transmitted is a short video, the initial frame of that video can be a good conditioning signal to be transmitted along with the prompt to guide the generation process at the receiver. As another example, if the source signal is an image and the receiver is interested in the relative location of different objects in the image, an edge map can be a good conditioning signal, as discussed later in Section \ref{PerformanceAnal}. 

The decomposed multi-modal semantics, i.e., prompt and conditioning signals, are then compressed, coded, and modulated, and transmitted over the wireless channel. At the receiver, after demodulation, decoding, and decompression, the received multi-modal semantics are simultaneously fed into a pre-trained generative model for high-fidelity synthesis of the source signal. We note that state-of-the-art generative models, e.g. stable diffusion (SD) \cite{StableDiffusion}, allow generation using only the prompt, which results in a relatively high distortion. On the other hand, some semantic information content of the source signal can be more efficiently communicated using other modalities. The conditioning signals improve the semantic fidelity as well as distortion at additional communication costs. Finally, the desired dimensions of the signal to be synthesized by GenAI are assumed to be fixed and known to the receiver.

 \vspace{-4.5mm}
\subsection{Semantic-aware Multi-Stream Transmission}

The transmitter sends the extracted multi-modal semantics, i.e., prompt and conditioning signals, in multiple streams over orthogonal frequency channels to enable semantic-aware design of specific transmission mechanisms, e.g. coding rate, modulation order, for different semantic modalities. Since the prompt is a compact textual message and contains the most important semantic content of the source signal, we use lossless compression for prompt to obtain a bit-stream ${\bf v}_0$. However, for other semantic modalities we use state-of-the-art Deep Neural Network (DNN)-based techniques for lossy compression as will be discussed in Section III. The bit-stream of the $i$-th conditioning signal is denoted by ${\bf v}_i$ , $i\in[I]$. 

We consider the Rayleigh fading channel and assume that the channels experience block fading within the transmit duration of each semantic modality. In particular, the channel gain of the $i$-th semantic modality is expressed as
\begin{align}\label{eq-channel}
h_{i} = \sqrt{\epsilon_o \left(d \right)^{-\varphi}} \tilde{h},~\forall i\in\{0\} \cup[I],
\end{align}
where $i=0$ (resp. $i\in [I]$) denotes the prompt (resp. conditioning signals), 
$\tilde{h}$ is a random scattering element captured by zero-mean and unit-variance circularly symmetric complex Gaussian (CSCG) variables, $\epsilon_o$ is the path loss at the reference distance $d_0 =1$ m, $\varphi$ is the path loss exponent, $d$ is the distance. Furthermore, the received signal-to-noise ratio (SNR) at the receiver for the $i$-th semantic modality is given by $\gamma_{i}(p_i) = \frac{ p_{i}|h_{i}|^2}{B_i N_0},~\forall i\in\{0\} \cup[I]$, 
where $p_{i}$ and $B_i$ denote the transmit power and the bandwidth allocated for the $i$-th modality, respectively, and $N_0$ is the noise power spectral density. For notational simplicity, we use $\gamma_i$ instead of $\gamma_i(p_i)$, $\forall i\in\{0\} \cup[I]$, in the sequel.

\subsubsection{\color{black}Re-transmission-based Communication of the Textual Prompt}
The textual prompt is a super compact message that is highly sensitive to errors. Even a single bit error can change a character/word causing a significant semantic error. Thereby, our framework uses re-transmissions if the prompt is received with any error. Based on this, we propose a cyclic redundancy check (CRC)-based re-transmission mechanism for the prompt to guarantee its reliable communication. Assuming the prompt is transmitted in packets of length $L$ bits, the packet error rate (PER) can be expressed as \cite{wu2014ARQ}
\begin{equation}\label{eq-per}
\text{PER}(p_0) \approx 1 -  L^{-k/\gamma_0} \exp\left( -\dfrac{b}{\gamma_0} \right),
\end{equation}
where $k$ and $b$ are parameters determined by the channel coding scheme and the modulation order. We denote the coding rate and the modulation order as $r$ and $M_{0}$, respectively. Hence, the expected total transmission delay for the prompt is given by
\begin{equation}\label{eq-delay1}
{T_0}(p_0) = \eta_{P} \cdot \eta_{R} \cdot \eta_{T}  = \dfrac{{\left| {{{\bf{v}}_0}} \right|}}{{{r}L}} \cdot \dfrac{1}{{1 - {\text{PER}(p_0)}}} \cdot \dfrac{L}{{{\log _2}(M_{0}){B_0}}},
\end{equation}
where $\eta_{P}=\frac{{\left| {{{\bf{v}}_0}} \right|}}{{{r}L}}$, $\eta_{R}=\frac{1}{{1 - {\text{PER}(p_0)}}}$, $ \eta_{T}=\frac{L}{{{\log _2}(M_{0}){B_0}}}$ denote the number of packets, the number of re-transmissions, and the transmission delay of each packet, respectively.

\subsubsection{Adaptive Modulation and Coding for other Semantic Modalities} 
For the transmission of the other semantic modalities, we adopt an adaptive MQAM modulation and coding scheme to cope with varying wireless channel. Denote the bit error rate (BER) of the $i$-th, $i\in[I]$, conditioning signal as $\text{BER}_i=\frac{1}{|{\bf{v}}_i|}\sum\nolimits_{n=1}^{|{\bf{v}}_i|}Pr\{[\hat{\bf{v}}_i]_n \neq [{\bf{v}}_i]_n\}$, where $\hat{\bf{v}}_i$ denotes the corresponding received data stream of the conditioning signal. With such a scheme, the achievable rate can be generally expressed as \cite{goldsmith1997variable}
\begin{equation}\label{eq-R-ber}
R_i(p_i, \text{BER}_i) = B_i{\log _2}\left( 1 + \dfrac{\alpha}{-\ln\left(\beta \text{BER}_i\right)} \gamma_i  \right), i\in[I],
\end{equation}
where $\alpha$ and $\beta$ are parameters determined by the coding schemes. Thereby, the transmission delay of $i$-th semantic stream is ${T_i}(p_i,\text{BER}_i) = \frac{{\left| {{{\bf{v}}_i}} \right|}}{{{R_i}(p_i,\text{BER}_i)}}$, $\forall i\in[I]$.

    \vspace{-3.8mm}
\subsection{Latency-aware Adaptive Semantic Communication}
In this subsection, we propose a semantic and latency-aware approach for selecting the optimal communication parameters such as the transmission power, and modulation order, for various semantic modalities, tailored to the prevailing channel quality as well as the specified semantic quality requirements. We formulate the problem as minimizing the transmission delay under the power and semantic quality constraints given as 
\begin{equation}\label{OptiProb}
\begin{aligned}
& \min_{\left\{p_0,p_{i},i\in[I]\right\}} & & \{\max\left\{T_0(p_0), T_{i}(p_i, \text{BER}_i),~i\in[I]~\right\} \}\\
& \text{s.t.} & &  \sum\nolimits_{i=0}^I p_{i}\leq P_{\text{T}},\\
&&& \Phi_j\left(\text{BER}_i,~i\in[I]\right) \geq \varepsilon_j,~j\in[J], \\
\end{aligned}
\end{equation}
where $P_{\text{T}}$ is the maximum transmit power budget, $\Phi_j$ and $\epsilon_j$ denote the $j$-th semantic quality metric and its requirement\footnote{It is assumed that a larger value of the metric represents improved semantic quality. Many common metrics can be easily transformed to satisfy this.}, respectively. Since reliable transmission of the prompt is assumed via re-transmissions, $\Phi_j$, $\forall j\in[J]$, is a multi-variable function of the BERs of the $I$ conditioning signals, denoted as $\Phi_j\left(\text{BER}_i,~i\in[I]\right)$. The quality requirements can represent any semantic distortion or perception \cite{RDP2, RDP1} requirement of the signal synthesized at the receiver, and can be enforced by various metrics, e.g. MS-SSIM \cite{SSIM}, LPIPS \cite{LPIPS}, FID \cite{FID}, CLIP \cite{radford2021learning}, etc. The semantic quality constraints represent epigraphs of the metrics, and in the most general case, convexity of problem (\ref{OptiProb}) depends on the behaviour of the selected metrics in terms of $\text{BER}_i,~i\in[I]$. In the typical scenario, where the prompt is transmitted along with only one additional modality,  (\ref{OptiProb}) is convex with a straightforward solution as given in \textit{\textbf{Lemma 1}}.
\begin{lemma}
    For the case of one conditioning signal, i.e., $I=1$, if $\Phi_j(\text{BER}_1)$, $\forall j\in[J]$, are monotonically non-increasing with respect to $\text{BER}_1$, (\ref{OptiProb}) is a convex problem. The optimal solution can be achieved if and only if $p_0+ p_{1}= P_{\rm{T}}$, $T_0(p_0)=T_1(p_1, \text{BER}_1)$, and $\text{BER}_1=\min\left\{\Phi_j^{-1}(\varepsilon_1),~j\in[J]\right\}$, where $\Phi_j^{-1}$ denotes the generalized inverse function of $\Phi_j$, $\forall j\in[J]$.
\end{lemma}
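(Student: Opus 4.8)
The plan is to exploit the monotonicity hypothesis to collapse the $J$ semantic-quality constraints into a single box constraint on $\text{BER}_1$, to show that the optimal BER saturates this bound, and only then to reduce (\ref{OptiProb}) to a convex power-allocation problem whose min-max structure forces the full-budget and delay-balancing conditions. Ordering the argument this way is essential, for reasons explained below.

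First I would rewrite the feasible set. Since each $\Phi_j$ is non-increasing, the constraint $\Phi_j(\text{BER}_1)\ge \varepsilon_j$ is equivalent to $\text{BER}_1\le \Phi_j^{-1}(\varepsilon_j)$, and intersecting over $j\in[J]$ gives the single convex constraint $\text{BER}_1\le \min\{\Phi_j^{-1}(\varepsilon_j):j\in[J]\}=:\text{BER}_1^{\star}$. From (\ref{eq-R-ber}), $R_1$ is strictly increasing in $\text{BER}_1$, because $-\ln(\beta\,\text{BER}_1)$ is strictly decreasing and positive, so $T_1(p_1,\text{BER}_1)=|\mathbf{v}_1|/R_1$ is strictly decreasing in $\text{BER}_1$ for every fixed $p_1$. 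Hence raising $\text{BER}_1$ to $\text{BER}_1^{\star}$ can only decrease the objective, which lets me fix $\text{BER}_1=\text{BER}_1^{\star}$ without loss of optimality; the strictness of this monotonicity will later upgrade the statement to a necessary condition.

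After this substitution the only variables are $p_0,p_1$. Using (\ref{eq-per})--(\ref{eq-delay1}) together with $\gamma_0\propto p_0$ from the SNR definition, I would write $T_0(p_0)=c\,\exp(a/p_0)$ with $a,c>0$, whose second derivative is manifestly positive, so $T_0$ is convex and strictly decreasing on $p_0>0$; likewise $T_1(p_1)=|\mathbf{v}_1|/\big(B_1\log_2(1+\kappa p_1)\big)$ is convex and strictly decreasing for a constant $\kappa>0$, since $x\mapsto 1/\log(1+\kappa x)$ is convex. The objective $\max\{T_0,T_1\}$ is then a maximum of convex functions, hence convex, over the convex budget set $p_0+p_1\le P_{\text{T}}$, which establishes the convexity claim. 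For the optimality conditions I would first argue the budget is tight: if $p_0+p_1<P_{\text{T}}$, distributing the slack to both powers strictly decreases both $T_0$ and $T_1$, hence the objective, so an optimum must satisfy $p_0+p_1=P_{\text{T}}$. Setting $p_1=P_{\text{T}}-p_0$, the curve $T_0(p_0)$ is strictly decreasing while $T_1(P_{\text{T}}-p_0)$ is strictly increasing in $p_0$, so their graphs cross exactly once; to the left of the crossing the objective equals $T_0$ and is decreasing, to the right it equals $T_1$ and is increasing, so the unique minimizer is precisely the crossing point, i.e. $T_0(p_0)=T_1(p_1)$. This single picture yields existence, uniqueness, and the ``if and only if'' for both power conditions.

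The hard part will be the convexity claim itself: $T_1(p_1,\text{BER}_1)$ is not jointly convex in $(p_1,\text{BER}_1)$ --- under the substitution $s=-\ln(\beta\,\text{BER}_1)$ the inner term $\log_2(1+\alpha\gamma_1/s)$, with $\gamma_1\propto p_1$, has an indefinite Hessian and so fails to be concave --- and therefore convexity cannot be asserted for the raw problem. The monotonicity assumption is exactly what resolves this: it forces the optimum onto the face $\text{BER}_1=\text{BER}_1^{\star}$, after which the remaining problem is genuinely convex in $(p_0,p_1)$. I would therefore be careful to carry out the BER reduction before invoking convexity, and to chain the strict monotonicity of $R_1$ with the strict convexity and monotonicity of $T_0,T_1$ so that the three stated conditions come out necessary and sufficient rather than merely sufficient.
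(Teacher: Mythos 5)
Your proof is correct and follows the same overall strategy as the paper's: collapse the $J$ quality constraints into the single box constraint $\text{BER}_1\le\min_j\Phi_j^{-1}(\varepsilon_j)$ via monotonicity, establish convexity of $T_0$ and $T_1$ in the powers, and argue that at the optimum the power budget is saturated and $T_0=T_1$. Where you genuinely depart from the paper is in the sequencing, and your version is the more defensible one. The paper asserts that the constraints ``collectively determine a convex feasible region in terms of $p_0,p_1$'' and that $T_0,T_1$ ``are also convex in $p_0,p_1$,'' implicitly treating $\text{BER}_1$ as eliminated without saying how; it never confronts the fact that $T_1(p_1,\text{BER}_1)$ is not jointly convex in both arguments, which you correctly identify (with the substitution $s=-\ln(\beta\,\text{BER}_1)$ one gets $\log_2(1+\alpha\gamma_1/s)$, and $\gamma_1/s$ has an indefinite Hessian). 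Your fix --- use the strict monotonicity of $T_1$ in $\text{BER}_1$ to pin $\text{BER}_1$ to its upper bound \emph{before} invoking convexity of the reduced two-variable problem --- is exactly what is needed to make the convexity claim of the lemma literally true. You also supply the explicit single-crossing argument on the budget line (with $T_0\to\infty$ as $p_0\to 0$ and $T_1\to\infty$ as $p_1\to 0$) to obtain existence, uniqueness, and the ``if and only if'' for all three conditions, whereas the paper dismisses this as ``straightforward.'' One small caveat: the positivity of the exponent $a$ in $T_0(p_0)=c\exp(a/p_0)$ requires $k\ln L+b>0$ (note $b=-0.31$ in Table~\ref{tab:wireless}), which holds whenever the PER expression in (\ref{eq-per}) is a valid probability, so it is worth stating that as the justification rather than asserting $a>0$ outright.
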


\begin{proof}
    The epigraph of a single-variable function is convex iff the function is (weakly) monotonic. Hence, if $\Phi_j(\text{BER}_1)$, $\forall j\in[J]$, are monotonically non-increasing, then the intersection of their epigraphs determines a convex region with respect to $\text{BER}_1$, given by $\text{BER}_1 \le \min\left\{\Phi_j^{-1}(\varepsilon_1),~j\in[J]\right\}$ where $\Phi_j^{-1}$ exists because $\Phi_j$ is (weakly) monotonic. Moreover, since $\text{BER}_1$ is a decreasing function of $p_1$ itself, the constraints collectively determine a convex feasible region in terms of $p_0, p_1$. Using (\ref{eq-delay1}) and (\ref{eq-R-ber}), the latency functions $T_0, T_1$ are also convex in $p_0, p_1$, and thereby the optimization is convex with a global solution. Finally, it is straightforward to show that in the optimum solution, all constraints hold with equality and we have $T_0=T_1$.  Hence, the optimum solution can be found by solving the non-linear system of equations $T_0(p_0)=T_1(p_1, \text{BER}_1)$, $p_0+ p_{1}= P_{\rm{T}}$, and $\text{BER}_1=\min\left\{\Phi_j^{-1}(\varepsilon_1),~j\in[J]\right\}$ in $p_0, p_1$. The overall transmission delay $T$ is then $T=T_0(p_0)=T_1(p_1, \text{BER}_1)$.
\end{proof}

\vspace{-5mm}
\section{Simulation Results}\label{PerformanceAnal}

\subsection{System setup}
We consider a generative image semantic communication setting, where the receiver is interested in the main objects inside the image, e.g. ``car", ``building", and its general structure, i.e. object shapes and locations. In this context, we consider a two-modal semantic decomposition framework, including a textual prompt to convey the main objects, and an edge map as the conditioning signal to convey the general image structure. To extract the prompt, we adopt the GPT-4 model \cite{achiam2023gpt} in this work. We also adopt the holistically-nested edge detection model \cite{xie2015holistically}, to extract a gray-scale edge map from the original image at the transmitter. The prompt is encoded to bits using UFT-8 \cite{yergeau2003utf}, and the edge map is compressed to bits using learned nonlinear transform coding \cite{balle2020nonlinear}. At the receiver, we adopt the state-of-the-art pre-trained SD model \cite{StableDiffusion} for conditional image generation from the received prompt and the edge map. The wireless transmission parameters for the two semantic streams are listed in Table \ref{tab:wireless}. 


 \begin{table}[t]
 \captionsetup{font=small}
	\caption{Parameter settings}
 \vspace{-4.2mm}
 \scriptsize
	\begin{center}
		\begin{tabular}{ll}
			\toprule[1.5pt]
			Parameters  & Values \\ \hline
   Packet length of prompt, $L$    & $\left| {{{\bf{v}}_0}} \right|$ bits \\
   Coding rate of prompt, $r$    & 1/2 (convolutional code) \\
   Modulation order of prompt, $M_0$    & 4 (QPSK) \\
    $\{k, b\}$ in (\ref{eq-per})  & $\{0.374, -0.31\}$\\
    $\{\alpha, \beta\}$ in (\ref{eq-R-ber})  & $\{1.5, 5\}$\\
			Transmit power budget, $P_{\text{T}}$    & 10 mW \\ 
			Path loss at $d_0 =1$ m, $\epsilon_o$      &-30 dB \\ 
			Path loss exponent, $\varphi$   & 3.4 \\ 
			Noise power density, $N_0$                 & -174 dBm/Hz \\ 
			Bandwidth, $B_0=B_1$                         & 1~MHz \\
			\toprule[1.5pt]
		\end{tabular}
	\end{center}
	\label{tab:wireless}
\vspace{-4.5mm}
\end{table}

 \begin{figure}[t]
  \centering
  \vspace{-3mm}
  \captionsetup[subfigure]{font=scriptsize, labelfont=scriptsize, skip=0.5pt, position=bottom} 
  \begin{subfigure}[b]{0.46\columnwidth}
    \centering
\includegraphics[width=\linewidth]{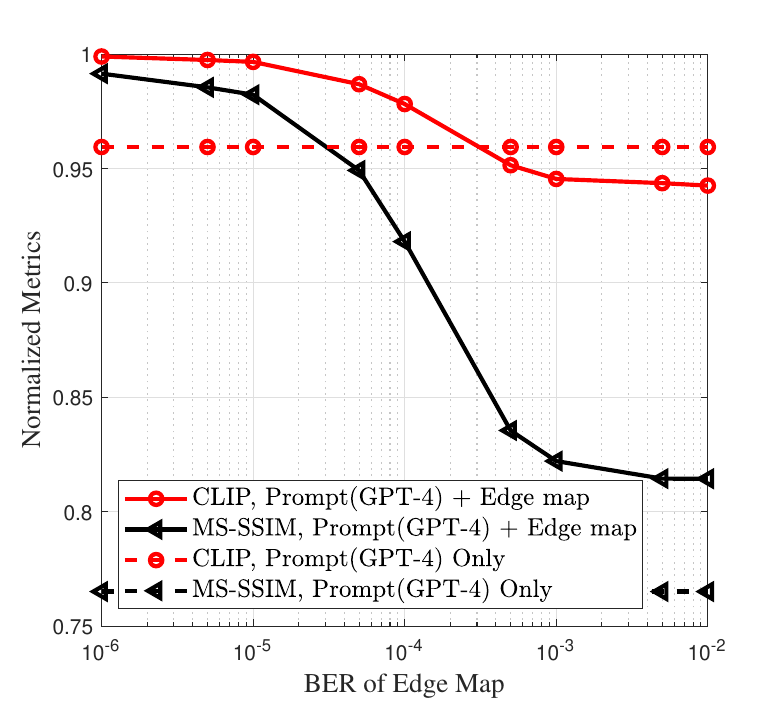}
    \caption{}
    \label{fig:sub1}
  \end{subfigure}%
  \begin{subfigure}[b]{0.46\columnwidth}
    \centering
    \includegraphics[width=\linewidth]{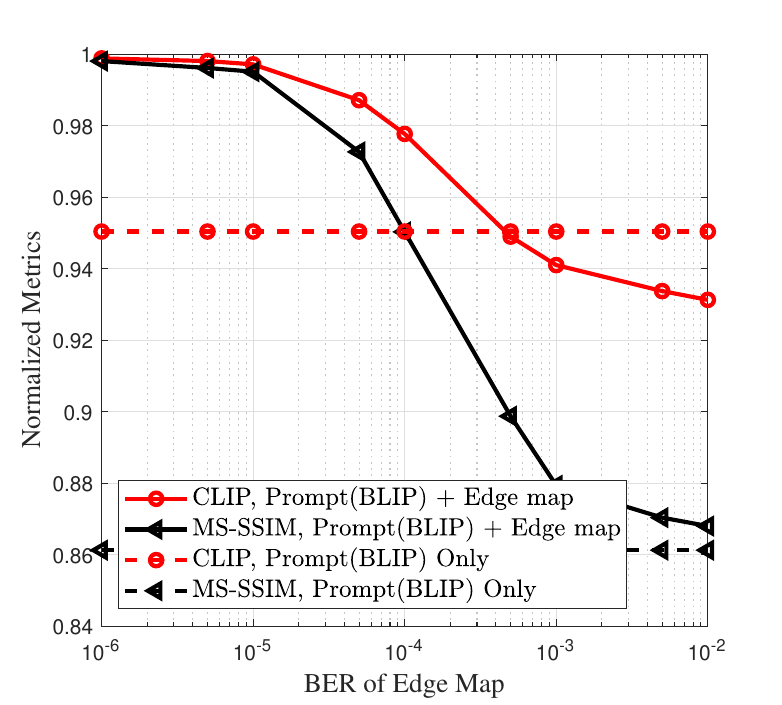}
    \caption{}
    \label{fig:sub2}
  \end{subfigure}
  \vspace{-2mm}
  \captionsetup{font={footnotesize}, singlelinecheck = off, justification = raggedright,name={Fig.},labelsep=period}
  \caption{Normalized CLIP and MS-SSIM versus BER of the edge map: (a) Prompt generated by GPT-4; (b) Prompt generated by BLIP.}
  \vspace{-7.9mm}
  \label{PandD_values}
\end{figure}

\vspace{-5mm}
\subsection{Semantic quality metrics}

Various reference-based or no-reference metrics can be adopted to evaluate the quality of the synthesized signal based on the semantic communication intent. In this work, we adopt two metrics, one to measure the semantic similarity, and the other to measure the structural similarity between the original and the synthesized images at the \textit{Gen SemCom} receiver. In order to assess the quality of the most important semantic information conveyed by the prompt, we use the contrastive language-image pretraining (CLIP) model \cite{radford2021learning}. Specifically, we define the CLIP metric as the cosine similarity between the CLIP embedding of the original and synthesized images, i.e. $f({\bf x})$ and $f(\hat{\bf x})$, given by $\text{CLIP} = ( \frac{f({\bf x})\cdot f(\hat{\bf x})}{\| f({\bf x}) \|_2 \cdot \| f(\hat{\bf x}) \|_2} + 1)/2$. For structural similarity, we use the multi-scale structural similarity (MS-SSIM) \cite{SSIM} metric.

\vspace{-4mm}
\subsection{Monotonicity of the quality metrics} \label{Monotone}
The end-to-end performance of the proposed framework depends on behaviour of the pre-trained GenAI model at the receiver, i.e. its sensitivity to the BER of the conditioning signal, and the requirements of the semantic quality metrics. Hence, we investigate monotonicity of the CLIP and MS-SSIM metrics to apply \textit{\textbf{Lemma 1}} for latency-aware adaptive SemCom. Specifically, through intensive simulations in presence of random bit errors in the proposed framework, we demonstrate that the normalized CLIP/MS-SSIM metrics are monotonically non-increasing functions of the BER of the edge map, as depicted in Fig. \ref{PandD_values}. Note that the proposed framework works independent of the choice of the AI models adopted at the transmitter and receiver. For example, the GPT-4 model can be replaced by smaller image-to-text models, e.g. BLIP \cite{li2022blip}, Oscar \cite{li2020oscar}, UNIMO \cite{li2020unimo}, at the cost of a slight degradation of the semantic quality metrics. These smaller models can be implemented locally on-device. The absolute CLIP values achieved with GPT-4 and BLIP, under error-free transmission of the edge map, are $0.918$ and $0.896$, respectively, while the resulting MS-SSIM values are similar. The normalized CLIP and MS-SSIM are defined as $\frac{\text{CLIP}}{\text{CLIP}_0}$ and $\frac{\text{MS-SSIM}}{\text{MS-SSIM}_0}$, respectively, where CLIP$_0$ and MS-SSIM$_0$ are the reference values of these metrics for perfect transmission of the edge map. We note that ``CLIP, Prompt + Edge Map" surpasses ``CLIP, Prompt Only" when $\text{BER}\leq 10^{-4}$, indicating that transmitting the edge map enhances the semantic quality as interpreted by the SD model. On the contrary, when $\text{BER} > 10^{-3}$, the inaccuracies in the received edge map due to the communication errors become detrimental, causing the SD model to misinterpret the prompt semantics. For example in Fig. \ref{fig:SystemResults}, the ``fence" is missing in the generated image due to the inaccuracies of the received edge map at $\text{BER}=10^{-3}$, despite the word ``fence" being part of the prompt. Moreover, transmitting the edge map alongside the prompt invariably boosts the MS-SSIM in comparison to relying solely on prompt.

\vspace{-5mm}
\begin{figure}[t]
		\centering	\includegraphics[width=0.45\textwidth]{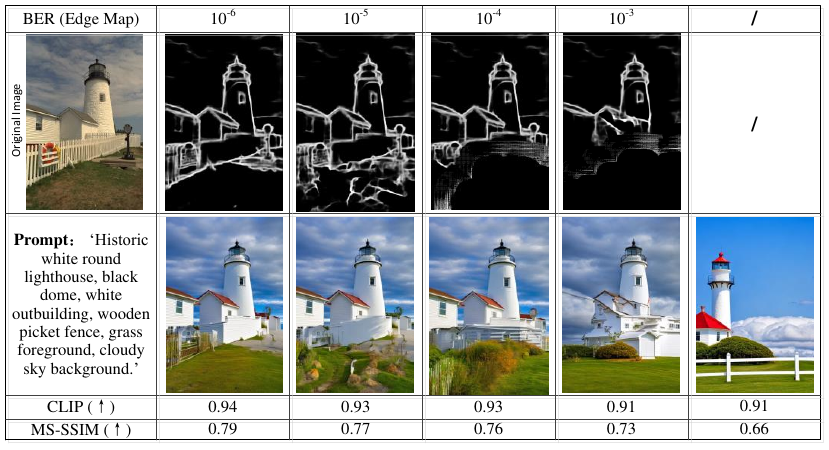}
  \captionsetup{font={footnotesize}, singlelinecheck = off, justification = raggedright,name={Fig.},labelsep=period}
		\caption{Visualization of the semantic quality of our proposed framework. The absolute CLIP/MS-SSIM values are reported in this figure for comparison.}
  \vspace{-7.5mm}
		\label{fig:SystemResults}
    \end{figure}
    
\subsection{Visual quality of the proposed framework}
In Fig. \ref{fig:SystemResults}, we provide illustrations to demonstrate the visual quality of our proposed framework on a sample natural image. These results demonstrate a good semantic quality for the proposed framework for ultra-low-rate transmission at bit per pixel (bpp) values as low as $0.0024$ and $0.017$ for the prompt and the edge map, respectively.  


\vspace{-4.5mm}
\subsection{Latency-aware Adaptive Semantic Communication}

We evaluate the optimal semantic latency in terms of the average SNR, i.e. $\overline{\gamma} = \frac{p_{\text{T}}\mathbb{E}(|h_{0}|^2)}{(B_0+B_1) N_0}$, as a channel quality indicator. With the investigation in subsection \ref{Monotone}, we apply \textit{\textbf{Lemma 1}} and solve the resulting nonlinear system of equations numerically. Fig. \ref{fig:four_subfigures} quantifies the semantic quality-latency trade-off for the proposed \textit{Gen SemCom} framework, and illustrates the optimal wireless parameters versus average SNR $\overline{\gamma}$ at various target semantic quality, i.e. normalized (CLIP, MS-SSIM), and BER values. It is evident that to achieve a higher target semantic quality (i.e. higher CLIP/MS-SSIM and lower BER) one should allocate more power to the transmission of the edge map, which in turn, increases the latency. Furthermore, when the channel SNR deteriorates, the sender should use a lower modulation order for the edge map to maintain the semantic quality at a target acceptable level. This will also increase the expected number of re-transmissions for the prompt.

For performance comparison, we adopt a semantic-unaware transmission benchmark, in which the prompt and edge map are treated equally and transmitted as one data stream regardless of the unequal importance of each semantic modality. For a fair comparison, the maximum power $p_{\rm T}$ and bandwidth $B_0+B_1$ are allocated for the transmission of this single stream with adaptive MQAM employed. The prompt BER becomes the bottleneck for single-stream transmission, and hence, the BER is kept at $10^{-7}$ to ensure reliable reception of the prompt. From Fig. \ref{fig:sub2}, the proposed semantic-aware multi-stream transmission significantly reduces the transmission latency at the lower SNR region, i.e., $\overline{\gamma}<12$\,dB. Finally, in Table \ref{tab:regions}, we provide modulation adaptation guidelines to achieve various semantic  qualities (CLIP, MS-SSIM) in varying channel qualities. Here, $M_1$ is the modulation order for the edge map, and $T$ is the transmission latency.

\vspace{-2mm}
    \begin{figure}[t]
  \centering
\captionsetup[subfigure]{font=scriptsize, labelfont=scriptsize, skip=0.5pt, position=bottom} 
  \vspace{-2mm}
  \begin{subfigure}[b]{0.44\columnwidth}
    \centering
\includegraphics[width=\linewidth]{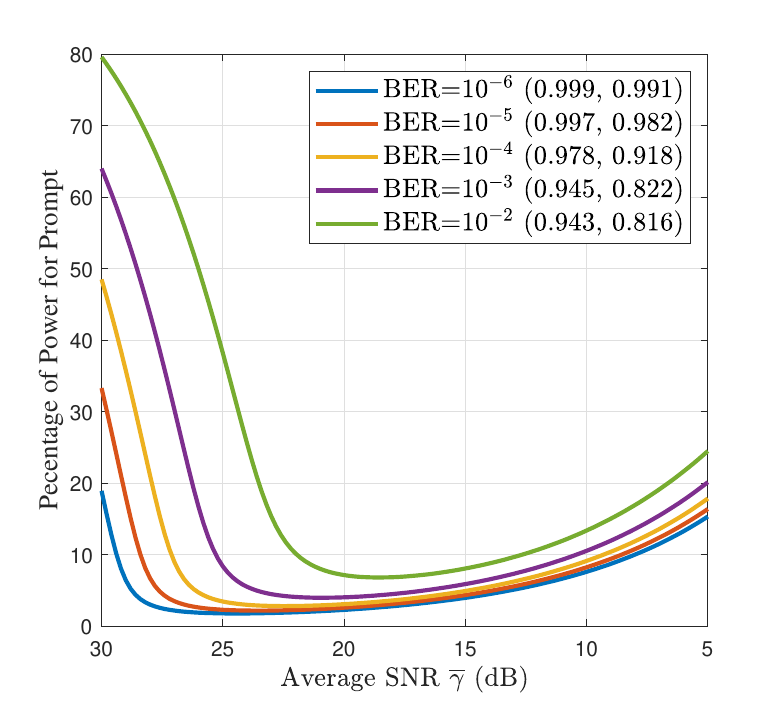}
    \caption{}
    \label{fig:sub1}
  \end{subfigure}%
  \begin{subfigure}[b]{0.44\columnwidth}
    \centering
    \includegraphics[width=\linewidth]{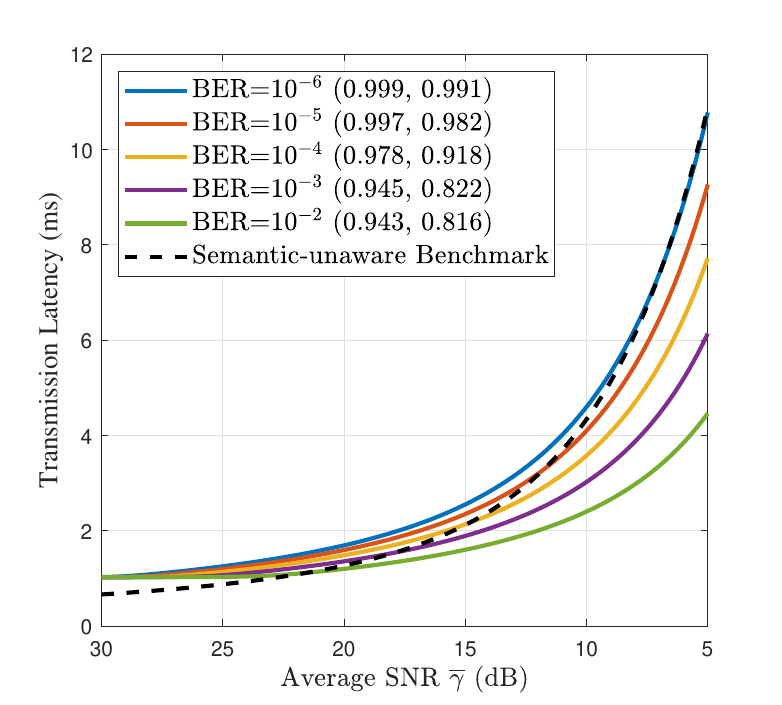}
    \caption{}
    \label{fig:sub2}
  \end{subfigure}
  \vspace{-2mm}
  \begin{subfigure}[b]{0.44\columnwidth}
    \centering
    \includegraphics[width=\linewidth]{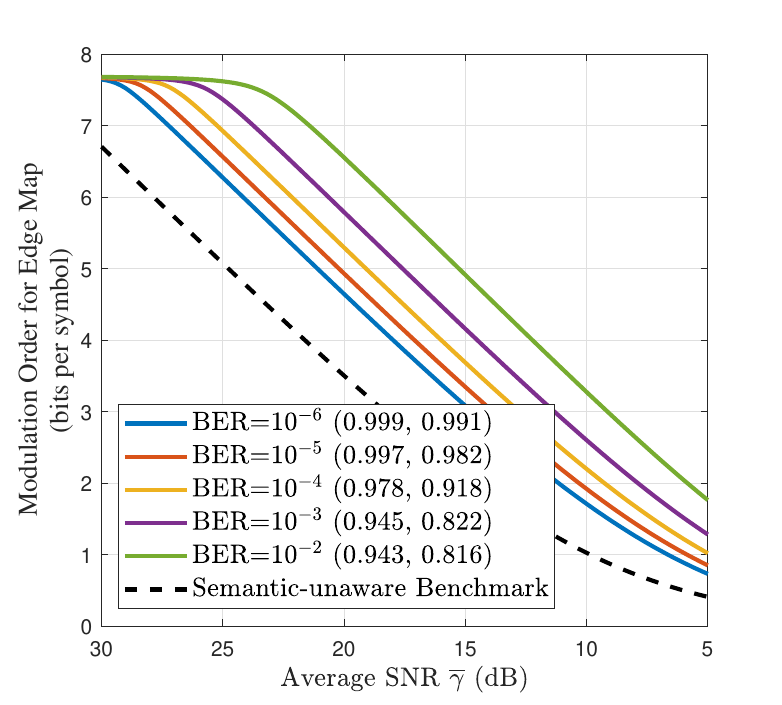}
    \caption{}
    \label{fig:sub3}
  \end{subfigure}%
  \begin{subfigure}[b]{0.44\columnwidth}
    \centering
    \includegraphics[width=\linewidth]{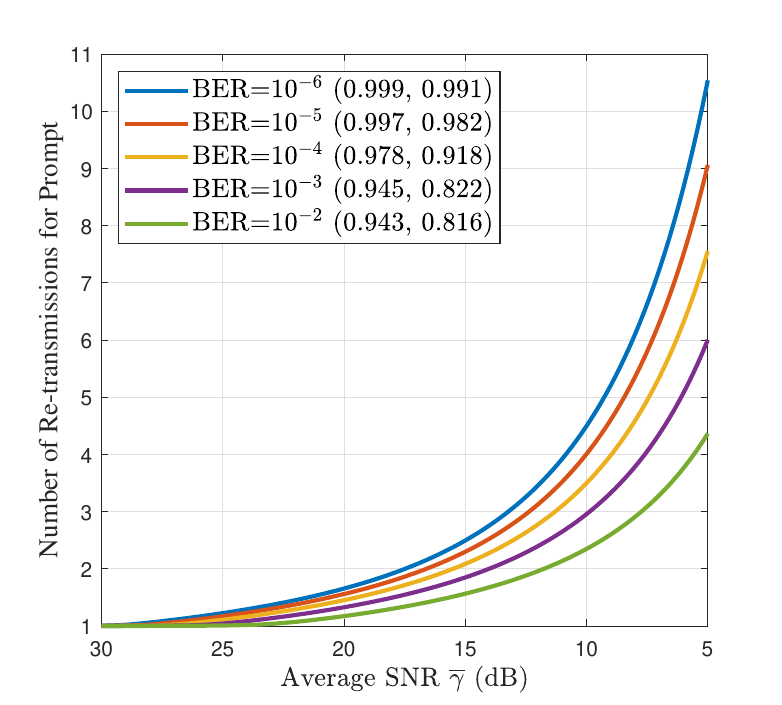}
    \caption{}
    \label{fig:sub4}
  \end{subfigure}
  \captionsetup{font={footnotesize}, singlelinecheck = off, justification = justified,name={Fig.},labelsep=period}
  \caption{Optimal wireless parameters versus average SNR $\overline{\gamma}$ at various target BERs: (a) Percentage of power for prompt; (b) Transmission Latency; (c) Modulation order (bits per symbol) for the edge map; (d) Average numbers of prompt re-transmissions. The (CLIP, MS-SSIM) notation in the legend represents the normalized CLIP/MS-SSIM values achieved for each curve.}
  \label{fig:four_subfigures}
  \vspace{-2.5mm}
\end{figure}

\vspace{-2.5mm}
\begin{table}[t]
\captionsetup{font=small}
\caption{Modulation adaptation for the proposed framework.}
\scriptsize
\centering
\begin{tabular}{|c|p{1.6cm}|p{1.6cm}|p{1.6cm}|c|}
\hline
\diagbox[dir=SE,innerwidth=1.2cm]{$M_1$}{$\overline{\gamma}$ 
($\overline{\gamma_1}$)}{Targets} & 
(0.999, 0.991) & (0.997, 0.982) & (0.978, 0.918) & $T$ (ms)\\
\hline
64 & \makecell{$30 \geq \overline{\gamma}\geq 24.2$\\($\overline{\gamma_1}=27$)} & \makecell{$30 \geq \overline{\gamma}\geq 23.4$\\($\overline{\gamma_1}=26$)} & \makecell{$30 \geq \overline{\gamma}\geq 22.2$\\($\overline{\gamma_1}=25$)} & 1.31\\
\hline
16 & \makecell{$24 \geq \overline{\gamma}\geq 18$\\($\overline{\gamma_1}=21$)} & \makecell{$23.2 \geq \overline{\gamma}\geq 17.2$\\($\overline{\gamma_1}=20$)} & \makecell{$22 \geq \overline{\gamma}\geq 16$\\($\overline{\gamma_1}=19$)} & 1.97 \\
\hline
4 & \makecell{$17.8 \geq \overline{\gamma}\geq 11.2$\\($\overline{\gamma_1}=14$)} & \makecell{$17 \geq \overline{\gamma}\geq 10.4$\\($\overline{\gamma_1}=13$)} & \makecell{$15.8 \geq \overline{\gamma}\geq 9.4$\\($\overline{\gamma_1}=12$)} & 3.93 \\
\hline
0 & \makecell{$11 \geq\overline{\gamma}$} & \makecell{$10.2 \geq\overline{\gamma}$} & \makecell{$9.2 \geq\overline{\gamma}$} & / \\
\hline
\end{tabular}
\vspace{-5mm}
\label{tab:regions}
\end{table}

\vspace{-1mm}
\subsection{Computation latency}
The computation latency depends on the choice of the AI models and the implementation scenario. For the on-device implementation scenario, smaller models like BLIP (55 GFLOPs) \cite{li2022blip} and mobile diffusion (153 GFLOPs) \cite{MobileDiff} are used. Using a state-of-the-art A17 Pro chip (35 TFLOPS/s), the resulting computation latencies are 1.6 ms for BLIP and 4.6 ms for mobile diffusion. These are comparable to the transmission latencies as reported in Fig. \ref{fig:sub2} and TABLE \ref{tab:regions}, emphasizing the need to minimize the transmission latency for reduced end-to-end latency. 


\vspace{-4.5mm}
\section{Conclusions}
This paper proposed a latency-aware and channel-adaptive semantic communication framework with pre-trained foundation generative AI models. In this framework, the transmitter extracts multi-modal semantic content of the input signal including a textual prompt and conditioning signals. The extracted semantics are then transmitted in multiple streams with semantic-aware appropriate coding and communication schemes and then input to a generative diffusion model at the receiver. Simulation results on natural images showcase the efficacy of the proposed framework in achieving ultra-low-rate, low-latency, and channel-adaptive semantic communications.


\vspace{-3mm}

\end{document}